\def\be#1#2\ee{\begin{equation}\label{eq:#1}#2\end{equation}}
\def\req#1{{\rm(\ref{eq:#1})}}
\newcommand*\diff{\mathop{}\!\mathrm{d}}
\newcommand*\dx{\diff x}
\newcommand*\dy{\diff y}
\newcommand*{\R}{\mathbb{R}}
\newcommand*{\N}{\mathbb{N}}
\newcommand*{\U}{\mathscr{U}}
\renewcommand{\P}{\mathscr{G}}
\newcommand{\gp}{\Omega}
\newcommand{\xx}{\mathbf{x}}
\newcommand{\yy}{\mathbf{y}}
\newcommand{\thsp}{\hspace*{0.1ex}}
\newtheorem{theoremold}{Theorem}
\title{A note on the uniqueness result\\[0.4ex]
       for the inverse Henderson problem\thanks{The 
    research leading to this work has been funded by the Deutsche 
    Forschungsgemeinschaft (DFG, German Research Foundation) -- 
    Projektnummer 233630050 -- TRR~146.}}
\author{Fabio Frommer\thanks{Institut f\"ur Mathematik, Johannes
    Gutenberg-Universit\"at Mainz, 55099 Mainz, Germany
    ({\tt fafromme@uni-mainz.de})} \and
    Martin Hanke\thanks{Institut f\"ur Mathematik, Johannes
    Gutenberg-Universit\"at Mainz, 55099 Mainz, Germany
    ({\tt hanke@math.uni-mainz.de})} \and 
        Sabine Jansen\thanks{Institut f{\"u}r Mathematik, 
        Ludwig-Maximilians Universit{\"a}t M{\"u}nchen, Theresienstr. 39, 
        80333 M{\"u}nchen, Germany ({\tt jansen@math.lmu.de})}. 
	}
\begin{document}
\reversemarginpar

\sloppy
\maketitle

\begin{abstract}
The inverse Henderson problem of statistical mechanics concerns classical 
particles in continuous space which  interact according to a pair potential 
depending on the distance of the particles. Roughly stated, it asks for the 
interaction potential given the equilibrium pair correlation function of the 
system. In 1974 Henderson proved that this potential is uniquely determined 
in a canonical ensemble and he claimed the same result for the 
thermodynamical limit of the physical system. Here we provide a rigorous proof 
of a slightly more general version of the latter
statement using Georgii's version of the Gibbs variational principle. 
\end{abstract}

\begin{keywords}
  Statistical mechanics, Gibbs variational principle, 
  radial distribution function
\end{keywords}

\begin{AMS}
  {\sc 82B21}
\end{AMS}


\section{Introduction}
\label{Sec:intro}
An important inverse problem in computational physics and
computational chemistry is the determination of the interacting forces 
in a system of particles
in continuous space, given structural information on the
spatial distribution of the particles. In the simplest incarnation of this 
problem it is assumed that the potential energy of the particle ensemble is 
determined by a pair potential which only depends on the distance of the
interacting particles. In an often cited paper Henderson~\cite{RHEN74}
has claimed that under given conditions of temperature and density
this pair potential is uniquely determined by the
so-called \emph{radial distribution function}, which -- suitably normalized --
assigns to each $r>0$ the expected number of particles on a sphere of 
radius $r$ around any given particle. Roughly speaking,
the radial distribution function is obtained from the pair correlation function
(called pair density function in the physical literature) associated 
with a canonical or grand canonical ensemble in a finite volume, 
when taking the limit of the volume to infinity, the so-called
\emph{thermodynamical limit}. To give credit to Henderson's contribution,
this inverse problem of statistical mechanics is sometimes called the
inverse Henderson problem.

Henderson's argument makes use of a technique suggested
by Hohenberg and Kohn~\cite{HK64}, Mermin~\cite{DMER65}, and others,
for studying a similar inverse problem for \emph{external potentials}.
The key idea is to apply a \emph{Gibbs variational principle},
which states that
in a system with given thermodynamic conditions the associated 
thermodynamic potential becomes minimal, 
if and only if the distribution of the particles
is given by the probability measure associated with this system.
The particular version of this principle used by Henderson is based on
the free energy functional in a canonical ensemble, where the
finite volume pair correlation function and not the radial distribution 
function is the relevant stochastic quantity. To extend the uniqueness result 
to the radial distribution function,
Henderson subsequently turns to the thermodynamical limit, ignoring the
possibility that the strict inequality of the variational principle
for finite volumes may turn
into an equality when the volume tends to infinity. Accordingly, there is
a gap in the argument provided in \cite{RHEN74} --
aside of the fact that no mention is made concerning the necessary requirements 
for the pair potential, e.g., its behavior for particle pairs with
diminishing distances.

In this note we specify a suitably rich class of pair potentials, for which
Henderson's approach can be turned into a rigorous proof by using a version
of the Gibbs variational principle due to
Georgii~\cite{HGEO95}. This class of
potentials does include hard core potentials and
the so-called Lennard-Jones type potentials, but is
a strict subclass  of all superstable potentials, 
cf., e.g., Ruelle~\cite{DRUE69} for this and further terminology. 
Strictly speaking, this means that our result does not answer the question
whether the radial distribution function associated with, say, the
classical Lennard-Jones potential can also occur for a much more exotic 
type of pair potential and the same values of temperature and density.

The thermodynamical limit may either be reached from a canonical or a
grand canonical ensemble. We therefore also state a variant of Henderson's
result which is more natural from the grand canonical perspective:
It will be shown below that the pair potential is uniquely determined when 
given the temperature, the chemical potential, and the infinite volume
pair correlation function; it is unknown whether in this second version
of Henderson's statement the pair correlation function can be replaced by the 
radial distribution function in the isotropic case.

We mention that for the corresponding inverse problem on the lattice
the uniqueness of the pair potential has already been settled by
Griffiths and Ruelle~\cite{GR71}; 
see also Caglioti, Kuna, Lebowitz, and Speer~\cite{CKLS06}.
 
The outline of this note is as follows:
In Section~\ref{Sec:preliminaries} we review the rigorous mathematical
setting of the thermodynamical limit of a grand canonical ensemble when
the system is translation invariant and its potential energy is given
by pairwise interactions only.
Then we formulate in Section~\ref{Sec:Gibbs}
the particular version of the Gibbs variational principle that is valid
in this setting. Section~\ref{Sec:Henderson} is devoted to the 
proof of the uniqueness results, 
and eventually we close with a few comments and open problems in Section \ref{Sec:problems}.

\section{The thermodynamical limit of the grand canonical ensemble}
\label{Sec:preliminaries}
We start from a grand canonical ensemble of pointlike classical particles
in a bounded box $\Lambda_\ell = [-\ell,\ell]^d$, with specified
inverse temperature $\beta>0$ and chemical potential $\mu\in\R$. 
We restrict our attention to the case that the interaction of the particles 
is given by a pair potential $u: \R^d \to \R\cup\{+\infty\} $, which is an even 
function, i.e., $u(x)=u(-x)$, satisfying the following two assumptions:
\begin{enumerate}
\item There exists $r_0>0$ and a decreasing function 
      $\varphi\colon (0,r_0]\to \R^+_0$ with
	\begin{align*}
		 \int_{0}^{r_0} r^{d-1}\varphi(r)\diff r = +\infty
	\end{align*}
	and 
\[
   u(x) \geq \varphi(|x|) \qquad \textrm{ for }\ |x| \leq r_0.
\]
\item There exists a decreasing function 
      $\psi\colon [r_0,\infty)\to \R^+_0$ with
	\begin{align*}
		 \int_{r_0}^\infty r^{d-1}\psi(r)\diff r <\infty
	\end{align*}
	and 
\be{majorant}
   |u(x)| \leq \psi(|x|) \qquad \textrm{ for }\ |x| \geq r_0.
\ee
\end{enumerate}
For this class $\mathscr{U}$ of potentials the associated configurational Hamiltonian
of $m\in\mathbb{N}_0$ particles at positions $x_i\in\R^d$, $i=1,\dots,m$, given by
\[
	H_u(\mathbf{x}_m) = \sum_{1\leq i<j\leq m} u(x_i-x_j) \,, \qquad
        \xx_m = (x_1,\dots,x_m)\,,
\]
is stable (cf.~Dobrushin~\cite{RDOB64}), i.e., for every $u\in\U$
there exists $B>0$ such that
\[
   H_u(\mathbf{x}_m) \,\geq\, -B m\,,
\]
independent of the number $m$ of particles.
The statistical distribution of the
particles of such a grand canonical ensemble is determined by the 
corresponding $m$-particle correlation functions
\be{rho-m-Lambda}
  \rho^{(m)}_{\Lambda_\ell}(\xx_m) \,=\, 
  \frac{e^{\beta\mu m}}{\Xi(\Lambda_\ell,\beta,\mu,u)} \sum_{N=0}^\infty \frac{e^{\beta\mu N}}{N!}
  \int_{\Lambda_\ell^N} e^{-\beta H_u(\xx_m,\yy_N)}\diff\yy_N\,, 
\ee
where $\xx_m=(x_1,\dots,x_m)\in\Lambda_\ell^m$,
$\yy_N=(y_1,\dots,y_N)\in\Lambda_\ell^N$, 
the integral $\int_{\Delta^0}c\diff\xx_0$ with bounded
domain $\Delta\subset\R^d$ is always taken to be equal to $c$,
and the normalizing constant 
\[
  \Xi(\Lambda_\ell,\beta,\mu,u) \,=\,
  \sum_{N=0}^\infty \frac{e^{\beta\mu N}}{N!}
  \int_{\Lambda_\ell^N} e^{-\beta H_u(\xx_N)}\diff\xx_N
\]
is the associated grand canonical partition function. In \req{rho-m-Lambda}
$m$ varies in $\N_0$, with $\rho^{(0)}_{\Lambda_\ell}$ being set to one.

We assume that for some sequence $(\ell_k)_k$ going to infinity as $k\to\infty$,
these correlation functions converge uniformly on compact
subsets to translation invariant functions $\rho^{(m)}:(\R^d)^m\to\R^+_0$, 
$m\in\N_0$, defined on the entire space;
in particular, this implies that $\rho^{(1)}$ is a constant.
It is known that these limiting correlation functions satisfy a 
so-called \emph{Ruelle bound}, i.e., 
\be{Ruelle-bound}
   \sup_{\mathbf{x}_m\in(\R^d)^m} \rho^{(m)}(\mathbf{x}_m) \,\leq\, \xi^m\,, \qquad
   m\in\N_0\,,
\ee
for some $\xi>0$, depending only on $\mu$, $\beta$, and $u$,
and that they define a translation invariant probability measure $\mathbb{P}$ 
on the configuration space
\[
   \Gamma \,=\, \bigl\{\, \gamma \subset \mathbb{R}^d \ \Big\vert\ 
                   \Delta \subset \mathbb{R}^d \text{ bounded} 
                   \,\Rightarrow \, \# (\gamma\cap \Delta)<\infty \bigr\}\,,
\]
i.e., the set of all locally finite subsets of $\mathbb{R}^d$
representing the positions of the (at most countably many) individual particles 
in space, equipped with an appropriate $\sigma$-algebra, 
cf.~Ruelle~\cite{DRUE70}. This means that
if $m\in\mathbb{N}_0$ is fixed and an  observable $F$ depends on all possible $m$-tuples of particles in a given configuration, i.e., 
\[
	F(\gamma)=\sum_{x_1,\dots,x_m\in\gamma\atop x_i\neq x_j}
 f(\mathbf{x}_m)
\]
for some $f=f_1+f_2$ with $f_1\in L^1((\R^d)^m)$ and $f_2\geq 0$, then
\be{rho-m}
  \int_\Gamma F(\gamma)\diff\mathbb{P}(\gamma)
  \,=\,\int_{(\R^d)^m} f(\mathbf{x}_m)\rho^{(m)}(\mathbf{x}_m)\diff\mathbf{x}_m
\ee
is the expected value of the corresponding observable.
In particular, if $|\Delta|$ denotes the volume of any bounded domain
$\Delta\subset\R^d$ then 
\[
   \rho(\mathbb{P}) 
   = \frac{1}{|\Delta|}\int_\Gamma \#(\gamma\cap\Delta) \diff\mathbb{P}(\gamma)
   = \rho^{(1)}
\]
is the limiting particle counting density.

%
According to \cite{DRUE70},
$\mathbb{P}$ is a translation invariant tempered 
($\beta,\mu,u$)-\emph{Gibbs measure}, 
denoted $\mathbb{P}\in\P(\beta,\mu,u)$. This means that $\mathbb{P}$
is supported by the set of tempered configurations (defined in \cite{DRUE70}),
and that
for every $F\in L^1(\mathbb{P})$ and every bounded domain $\Delta\subset\R^d$
there holds
\be{ruelle-eq}
\begin{aligned}
   &\int_\Gamma F(\gamma)\diff\mathbb{P}(\gamma) \\
   &\qquad =\,\sum_{N= 0}^\infty \frac{z^N}{N!} \int_{\Delta^N}\!
	\Biggl( \int_{\Gamma(\Delta^c)}F(\gamma')
	e^{-\beta W_u(\mathbf{x}_N;\gamma)} \diff \mathbb{P}(\gamma)\!\Biggr)
	e^{-\beta H_u(\mathbf{x}_N)}\diff \mathbf{x}_N\,,
\end{aligned}
\ee
where $\gamma'=\gamma\cup\{x_1,\dots,x_N\}$,
$\Gamma(\Delta^c)=\bigl\{\gamma\in\Gamma\,:\, 
\gamma\subset\R^d\setminus\Delta\bigr\}$,
and the interaction $W_u$ between particles at $x_i$, $i=1,\dots,N$ and those of  $\gamma\in\Gamma$ is defined as
\be{definteraction}
	W_u(\mathbf{x}_N;\gamma) = \sum_{i=1}^N\sum_{y\in\gamma} u(x_i-y)\,,
\ee
if the series converges absolutely, and as $+\infty$ otherwise. 

Given the limiting correlation functions one can define
Janossy densities $j^{(m)}_{\Lambda_\ell}:\Lambda_\ell^m\to\R$ for 
every $m\in\N_0$ and $\ell>0$ via
\be{janossyformula} 
   j_{\Lambda_\ell}^{(m)}(\mathbf{x}_m)
   \,=\, \sum_{k = 0}^\infty \frac{(-1)^k}{k!}
            \int_{\Lambda_\ell^k}\rho^{(m+k)}(\mathbf{x}_{m},\mathbf{y}_k)
            \diff\mathbf{y}_k\,.
\ee
These Janossy densities provide the induced probability density on 
$\Lambda_\ell$, for which
\be{localobservable}
   \int_\Gamma F(\gamma)\diff\mathbb{P}(\gamma)
   \,=\, \sum_{m=0}^\infty \frac{1}{m!} \int_{\Lambda_\ell^m} 
            f_m(\xx_m) j^{(m)}_{\Lambda_\ell}(\xx_m) \diff\xx_m
\ee
for every $F\in L^1(\mathbb{P})$, which satisfies
$F(\gamma)=F(\gamma\cap\Lambda_\ell)$, and which is given by functions
$f_m:\Lambda^m\to\R$, $m\in\N_0$, such that $F(\gamma_m)=f_m(\xx_m)$,
when $\gamma_m=\{x_1,\dots,x_m\}\subset\Lambda$.
Such observables $F$ are thus called \emph{local observables}.

Varying $\beta>0$, $\mu\in\R$, and $u\in\U$, we denote by 
\[
  \P=\bigcup_{\beta,\mu,u}\P(\beta,\mu,u)
\]
the union of all translation invariant tempered Gibbs measures, 
some of which may 
not be obtained as limits of finite-volume Gibbs measures with empty boundary 
conditions (cf., e.g., Georgii~\cite{HGEO94}). 
We mention for later use that for almost every $x\in\mathbb{R}^d$ and $\mathbb{P}$ almost surely for every $\mathbb{P}\in \P$
the interaction defined in \req{definteraction} is finite,
and there holds
\be{asconvergence}
	\lim_{\ell\to\infty}W_{u}(x;\gamma\cap\Lambda_\ell)= W_{u}(x;\gamma)\,,
\ee 
see \cite[Sect.~5]{TKUN03}.
We mention further that there exists some 
$\mu_0\in\R$ depending on $\beta$ and $u\in\U$, such that for 
$\mu<\mu_0$ -- the so-called \emph{gas phase} --
the set $\P(\beta,\mu,u)$ consists of a single Gibbs measure $\mathbb{P}$ only,
cf.~\cite{DRUE70},
and that in this case the correlation functions $\rho^{(m)}_{\Lambda_\ell}$
converge to $\rho^{(m)}$ as $\ell\to\infty$ for every $m\in\N_0$, 
cf.~\cite{DRUE69}.

\section{The Gibbs variational principle}
\label{Sec:Gibbs}
The Gibbs variational principle goes back to Gibbs' work
(cf.~\cite[p.~131]{GIBB02}) and appears in different variants in
statistical mechanics and stochastic analysis; we refer, e.g., to the books by
Ruelle~\cite{DRUE69}, Gallavotti~\cite{GGAL99}, and Georgii~\cite{HGEO11}
for rigorous mathematical treatments of this variational principle. 
Here we apply a particular version
established by Georgii and Zessin in the series of papers 
\cite{HGEO93,HGEO94,HGEO95}. 

For pair potentials $u\in\U$ and Gibbs measures $\mathbb{P}\in\P$,
we introduce the \emph{specific energy}
\be{energy}
   E(u,\mathbb{P}) = 
   \lim_{\ell \to \infty}\frac{1}{|\Lambda_\ell|}
   \int_{\Gamma} 
      \frac{1}{2}\!\!\!\sum_{x,y\in \gamma\cap\Lambda_\ell \atop x\neq y}\!\!\! u(x-y)
   \diff \mathbb{P}(\gamma)
\ee
and the \emph{specific (relative) entropy}
\[
   S(\mathbb{P}) = \lim_{\ell \to \infty } \frac{1}{|\Lambda_\ell|}
                    \sum_{m=0}^\infty\frac{1}{m!}\int_{\Lambda_\ell^m}
                       j_{\Lambda_\ell}^{(m)}(\mathbf{x}_m)
                       \log\bigl(j_{\Lambda_\ell}^{(m)}(\mathbf{x}_m)\bigr)
                    \diff \mathbf{x}_m\,,
\]
where both limits are known to exist in $\R\cup\{+\infty\}$:
concerning the specific energy
see Proposition~\ref{lem:energyeasy} below; concerning the specific entropy
we refer to Robinson and Ruelle~\cite{DRUE67} --
in fact, using \req{janossyformula} and \req{Ruelle-bound}
it is not too difficult to see that $S(\mathbb{P})$ is finite
for every $\mathbb{P}\in\P$. 
The relative entropy differs from the standard entropy by a sign, 
to simplify language we call $S(P)$ nevertheless the entropy. 
We take similar liberties with the sign of the 
(specific) \emph{grand potential}
\begin{align}\label{eq:freeenergyfunctional}
   \gp_{\beta,\mu}(u,\mathbb{P})
   = \mu\thsp\rho(\mathbb{P}) 
     - E(u,\mathbb{P}) - \frac{1}{\beta}\thsp S(\mathbb{P})\,,
\end{align}
for which the following variational principle holds true
(\cite[Theorem~3.4]{HGEO95}).

\begin{theoremold}
\label{thm:gvp}
For fixed $\mu\in\R$, $\beta>0$, and $u\in\U$ 
the grand potential $\gp_{\beta,\mu}(u,\,\cdot\,)$ has values in 
$\R\cup\{-\infty\}$. Its maximal value $p$ on $\P$ is attained
for every $\mathbb{P}\in\P(\beta,\mu,u)$,
and there holds
\[
  \gp_{\beta,\mu}(u,\mathbb{P}) < p
\]
for every other $\mathbb{P}\in\P$. Here, 
\[
   p \,=\, \lim_{\ell\to\infty} \frac{1}{\beta|\Lambda_\ell|}
           \log\Xi(\Lambda_\ell,\beta,\mu,u)
\]
is the \emph{pressure} in the thermodynamical limit.
\end{theoremold}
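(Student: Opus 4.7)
The plan is to derive the principle from the analogous finite-volume variational identity and then pass to the thermodynamic limit, as done by Georgii and Zessin in \cite{HGEO93,HGEO94,HGEO95}. The two ingredients are Gibbs's inequality (i.e., strict convexity of $t\mapsto t\log t$), which gives the finite-volume variational principle, and the DLR characterization \req{ruelle-eq} of the translation invariant tempered Gibbs measures in $\P(\beta,\mu,u)$.

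First, for the finite-volume version, I would fix $\ell>0$ and let $\mathbb{Q}$ be any probability measure on $\Gamma(\Lambda_\ell)$ with Janossy densities $(j^{(m)})_{m\in\N_0}$. A direct application of Gibbs's inequality relative to the unit-intensity Poisson point process on $\Lambda_\ell$ yields
\[
   \sum_{m=0}^\infty \frac{1}{m!} \int_{\Lambda_\ell^m}
      \bigl( \beta H_u(\xx_m) - \beta\mu m + \log j^{(m)}(\xx_m) \bigr)
      j^{(m)}(\xx_m)\, \diff\xx_m
   \,\geq\, -\log\Xi(\Lambda_\ell,\beta,\mu,u),
\]
with equality precisely when $j^{(m)}(\xx_m) = e^{\beta\mu m-\beta H_u(\xx_m)}/\Xi(\Lambda_\ell,\beta,\mu,u)$ for every $m$. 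Dividing by $\beta|\Lambda_\ell|$ recasts this as an inequality between a finite-volume analogue of $\gp_{\beta,\mu}(u,\mathbb{Q})$ and the finite-volume pressure $p_\ell = (\beta|\Lambda_\ell|)^{-1}\log\Xi(\Lambda_\ell,\beta,\mu,u)$.

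Next I would pass to the thermodynamic limit. For any $\mathbb{P}\in\P$, I apply the finite-volume inequality to the projection of $\mathbb{P}$ onto $\Gamma(\Lambda_\ell)$ given by \req{localobservable}, and let $\ell\to\infty$. The Ruelle bound \req{Ruelle-bound} together with the classical arguments of Robinson--Ruelle gives $p_\ell\to p$, while the existence of the specific energy and specific entropy ensures that the normalized left hand side converges to $-\gp_{\beta,\mu}(u,\mathbb{P})$. This already yields $\gp_{\beta,\mu}(u,\mathbb{P})\leq p$ for every $\mathbb{P}\in\P$, and in particular shows that the grand potential takes values in $\R\cup\{-\infty\}$. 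For the reverse inequality when $\mathbb{P}\in\P(\beta,\mu,u)$, I would substitute the DLR kernel from \req{ruelle-eq} into the finite-volume functional: the interior part reproduces $-p_\ell$, while the cross interaction $W_u(\xx_N;\gamma\cap\Lambda_\ell^c)$ enters only as a boundary term, which by the integrable majorant $\psi$ in \req{majorant}, the Ruelle bound, and the almost sure convergence \req{asconvergence} is of surface order and hence vanishes after division by $|\Lambda_\ell|$.

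The main obstacle is to upgrade $\leq$ to strict inequality for every $\mathbb{P}\in\P\setminus\P(\beta,\mu,u)$. The strategy is to show that whenever equality holds in the limit, the finite-volume conditional densities of $\mathbb{P}$ on $\Lambda_\ell$ given the exterior configuration must agree $\mathbb{P}$-almost surely with the corresponding Gibbs conditional densities, which forces $\mathbb{P}$ to satisfy \req{ruelle-eq}. Technically this requires decomposing the specific entropy into a non-negative conditional part measuring how far $\mathbb{P}$ is from the DLR kernel, plus a genuine surface contribution; strict convexity of $t\log t$ then guarantees that the conditional part vanishes only on Gibbs measures. Carrying out this decomposition rigorously, including the identification of the correct surface correction and its control via the Ruelle bound and the integrability of $\psi$, is the technical heart of \cite{HGEO95} and where essentially all of the work lies.
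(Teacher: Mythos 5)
The paper does not prove Theorem~\ref{thm:gvp} at all: it is quoted as an external result, namely \cite[Theorem~3.4]{HGEO95}, and the note builds on it as a black box. So there is no internal argument to compare your proposal against; what can be assessed is whether your sketch would stand as a self-contained proof, and it would not. The first two stages (the finite-volume Gibbs inequality relative to the Poisson process, the passage $p_\ell\to p$, and the attainment of the maximum for $\mathbb{P}\in\P(\beta,\mu,u)$ up to surface-order boundary terms) follow the standard Georgii--Zessin route and are plausible modulo the boundary estimates you wave at. The decisive claim, however, is the \emph{strict} inequality for every $\mathbb{P}\in\P\setminus\P(\beta,\mu,u)$, and this is precisely where a ``finite volume plus limit'' argument fails: strict inequalities are not preserved under limits, and the restriction of a translation-invariant $\mathbb{P}$ to $\Lambda_\ell$ is in general \emph{not} the free-boundary-condition finite-volume Gibbs measure, so equality of the limiting functionals does not force equality of the finite-volume densities in your Gibbs inequality. (This is exactly the gap in Henderson's original argument that motivates the present note, as discussed in Section~\ref{Sec:intro}.) You name the missing ingredient --- a decomposition of the specific entropy into a nonnegative conditional part that vanishes only on DLR measures plus a controllable surface correction --- but you do not supply it and explicitly defer it to \cite{HGEO95}. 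As a proof, therefore, the proposal has a genuine gap at its single most important step; as a reading of where the difficulty lies, it is accurate.

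One smaller caveat: your equality criterion in the finite-volume inequality identifies the Janossy densities of $\mathbb{Q}$ with the free-boundary Gibbs densities $e^{\beta\mu m-\beta H_u(\xx_m)}/\Xi(\Lambda_\ell,\beta,\mu,u)$, whereas the measures in $\P(\beta,\mu,u)$ are characterized by the DLR equations \req{ruelle-eq}, whose kernels carry the boundary interaction $W_u(\xx_N;\gamma)$. Reconciling these two normalizations --- showing that the boundary interaction contributes only a surface term to both the energy and the entropy, uniformly enough to survive the $\ell\to\infty$ limit in \emph{both} directions of the variational principle --- requires the superstability and temperedness machinery of \cite{DRUE70,HGEO94} and is part of the work you are implicitly outsourcing.
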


For the proof of the Henderson result we will also need the following identity,
for which we include a self-contained proof for the ease of the reader.

\begin{proposition}\label{lem:energyeasy}
For every $u\in\mathscr{U}$ and every $\mathbb{P}\in\P$
the limit in \req{energy} belongs to $\R\cup\{+\infty\}$, and is given by
\be{energyeasy}
   E(u,\mathbb{P}) = \frac{1}{2}\int_{\mathbb{R}^d} u(x)\rho^{(2)}(x,0)\diff x\,,
\ee
where $\rho^{(2)}$ is the pair correlation function associated with 
$\mathbb{P}$.
\end{proposition}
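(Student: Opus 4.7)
The plan is to evaluate the finite-$\ell$ integral in \req{energy} via the correlation identity \req{rho-m} with $m=2$. To fit the structure required there, decompose $u=u^+-u^-$. Assumptions~(1) and~(2) on $u\in\U$ imply $u\geq 0$ on $\{|z|\leq r_0\}$ and $u^-(z)\leq \psi(|z|)\mathbf{1}_{|z|>r_0}$, so $(x,y)\mapsto u^-(x-y)\mathbf{1}_{\Lambda_\ell^2}(x,y)$ lies in $L^1(\R^{2d})$ while $u^+\geq 0$. Applying \req{rho-m} to each piece yields
\[
  \int_\Gamma \frac{1}{2}\!\!\!\sum_{x,y\in\gamma\cap\Lambda_\ell \atop x\neq y}\!\!\! u(x-y)\diff\mathbb{P}(\gamma)
  \,=\,\frac{1}{2}\int_{\Lambda_\ell\times\Lambda_\ell} u(x-y)\rho^{(2)}(x,y)\dx\dy.
\]

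Next, using translation invariance $\rho^{(2)}(x,y)=\rho^{(2)}(x-y,0)$, substituting $z=x-y$ and integrating out $y$, division by $|\Lambda_\ell|$ turns the right-hand side into
\[
  \frac{1}{2}\int_{\R^d} u(z)\,\rho^{(2)}(z,0)\,\phi_\ell(z)\diff z\,,
\]
where $\phi_\ell(z):=|\Lambda_\ell\cap(\Lambda_\ell-z)|/|\Lambda_\ell| = \prod_{i=1}^d\max\bigl(0,1-|z_i|/(2\ell)\bigr)$; in particular $0\leq\phi_\ell\leq 1$ and, for each fixed $z$, $\phi_\ell(z)\nearrow 1$ as $\ell\to\infty$.

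I would then pass to the limit by splitting the $z$-integral at $|z|=r_0$. On $\{|z|\leq r_0\}$ the integrand $u\rho^{(2)}$ is nonnegative, so monotone convergence gives
\[
   \int_{\{|z|\leq r_0\}} u(z)\rho^{(2)}(z,0)\phi_\ell(z)\diff z
   \,\longrightarrow\, \int_{\{|z|\leq r_0\}} u(z)\rho^{(2)}(z,0)\diff z
   \,\in\,[0,+\infty]\,.
\]
On $\{|z|>r_0\}$ the Ruelle bound \req{Ruelle-bound} yields $\rho^{(2)}\leq\xi^2$, so $|u(z)|\rho^{(2)}(z,0)\leq\xi^2\psi(|z|)$, which is integrable by assumption~(2); dominated convergence produces the corresponding limit. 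Adding the two contributions establishes \req{energyeasy}, and in particular confirms that the limit in \req{energy} exists in $\R\cup\{+\infty\}$.

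The only delicate point is that $u\rho^{(2)}$ need not be globally in $L^1$: the lower bound $\varphi$ is allowed to produce an infinite contribution near the origin (as for hard-core or Lennard-Jones singularities). The split above handles this gracefully because the potentially divergent contribution has a definite sign, while the complementary part is uniformly controlled by the integrable majorant $\xi^2\psi$.
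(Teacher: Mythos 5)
Your argument is correct, and it shares the paper's starting point -- the reduction via \req{rho-m} with $m=2$ to the double integral $\int_{\Lambda_\ell^2}u(x-y)\rho^{(2)}(x,y)\diff(x,y)$ -- but it takes a genuinely different route through the thermodynamic limit. The paper handles the possibly infinite contribution by splitting the inner $y$-integral over the level set $\Delta_{x,\ell}=\{y:u(x-y)\geq 1\}$ and its complement, and then passes to the limit by decomposing $\Lambda_\ell^2$ into a bulk region, a boundary layer, and far pairs, controlling the latter two with $\limsup$ estimates and an auxiliary radius $r=r(\varepsilon)$. You instead compute the exact geometric weight $\phi_\ell(z)=|\Lambda_\ell\cap(\Lambda_\ell-z)|/|\Lambda_\ell|$, observe that $\phi_\ell\nearrow 1$ pointwise, and dispatch the limit by monotone convergence on $\{|z|\leq r_0\}$ (where $u\geq\varphi\geq 0$ by assumption~(1)) and dominated convergence on $\{|z|>r_0\}$ (with majorant $\xi^2\psi$ from \req{Ruelle-bound} and \req{majorant}). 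This buys a shorter and more transparent treatment of the $E(u,\mathbb{P})=+\infty$ case, since the potentially divergent part has a definite sign by construction rather than by a level-set argument; the paper's decomposition, on the other hand, is more robust in that it does not rely on the explicit product structure of the box $\Lambda_\ell$. One small point worth making explicit in your write-up: the interchange of integrals behind the substitution $z=x-y$ should be justified separately for $u^+$ (Tonelli) and $u^-$ (Fubini, using that $u^-\mathbf{1}_{\Lambda_\ell^2}$ is integrable), exactly along the lines of the $u=u^+-u^-$ decomposition you already set up to apply \req{rho-m}.
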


\begin{proof}
Let $\rho^{(2)}$ be the translation invariant pair correlation function 
associated with $\mathbb{P}$. Then we can apply \req{rho-m} with $m=2$ and
\[
   f(x,y) = \begin{cases}
               u(x-y)\,, & x,y\in\Lambda_\ell\,,\\
               0\,,        & \text{else}\,,
            \end{cases}
\]
to rewrite
\begin{align}
\nonumber
  &\int_\Gamma \sum_{x,y\in\gamma\cap\Lambda_\ell\atop x\neq y}
     u(x-y)\,\diff\mathbb{P}(\gamma)
   \,=\,\int_{\Lambda_\ell^2} u(x-y)\rho^{(2)}(x,y)\diff(x,y) \\
\label{eq:deltasplit}
  &=\, \int_{\Lambda_\ell}\!\!\left(\int_{\Delta_{x,\ell}}u(x-y)\rho^{(2)}(x,y)\diff y
       +\int_{\Lambda_\ell\backslash\Delta_{x,\ell}} u(x-y)\rho^{(2)}(x,y)\diff y\right)
       \! \diff x,
\end{align}
where the set 
$\Delta_{x,\ell} = \left\lbrace y\in\Lambda_\ell \mid u(x-y) \geq 1 
\right\rbrace$ is bounded, and $u(x-\,\cdot\,)$ is absolutely integrable 
over $\R^d\setminus\Delta_{x,\ell}$ because of \req{majorant}.
Therefore, and since $\rho^{(2)}$ is bounded, compare~\req{Ruelle-bound},
the second inner integral in \req{deltasplit} is uniformly bounded,
independent of $\ell$ and $x\in\Lambda_\ell$.
The integrand of the first inner integral is nonnegative. 
In case this integral diverges for some $\ell\in\N$ and some $x\in\Lambda_\ell$
then the total right-hand side of \req{deltasplit} equals $+\infty$, and
this remains true for all larger values of $\ell$, 
i.e., $E(u,\mathbb{P})=+\infty$. The same argument applied to the
right-hand side of \req{energyeasy} shows that equality holds in
\req{energyeasy} in this case, because $\rho^{(2)}$ is translation invariant. 

On the other hand, if the inner integral over $\Delta_{x,\ell}$
in \req{deltasplit} is finite for every $\ell\in\N$ and every 
$x\in\Lambda_\ell$, then the same argument
as before, together with the translation invariance of $\rho^{(2)}$
shows that
\be{translinv}
   \int_{\R^d} u(x-y)\rho^{(2)}(x,y)\dy 
   \,=\, \int_{\R^d} u(y)\rho^{(2)}(y,0)\dy
\ee
is absolutely convergent.
Now we assume that $\ell$ is greater than the parameter $r_0$
which occurs in \req{majorant}.
Then we choose some $r$ between $r_0$ and $\ell$, 
and we split the domain $\Lambda_\ell^2$ of integration into
\[
   \Lambda_\ell^2 \,=\, \Delta_1 \cup \Delta_2 \cup \Delta_3\,,
\]
where
\begin{align*}
   \Delta_1 &\,=\, \{\,(x,y)\in\Lambda_\ell^2\,:\, 
                       x\in\Lambda_{\ell-r}\,, \
                       |y-x|\leq r\,\}\,,\\
   \Delta_2 &\,=\, \{\,(x,y)\in\Lambda_\ell^2\,:\,
                       x\in\Lambda_\ell\setminus\Lambda_{\ell-r}\,, \
                       |y-x|\leq r\,\}\,,\\
   \Delta_3 &\,=\, \{\,(x,y)\in\Lambda_\ell^2\,:\, |x-y|>r\,\}\,.
\end{align*}
Under these assumptions on $r$ and $\ell$ it follows from the fact that 
\req{translinv} is absolutely convergent that
\begin{align*}
   &\Bigl|\int_{\Delta_2} u(x-y)\rho^{(2)}(x,y)\diff(x,y)\Bigr|
    \,\leq\, \int_{\Lambda_\ell\setminus\Lambda_{\ell-r}} 
                \int_{|y-x|\leq r} \bigl|u(x-y)\bigr|\rho^{(2)}(x,y)\dy\dx\\[1ex]
   &\qquad \,\leq\, \bigl(|\Lambda_\ell|-|\Lambda_{\ell-r}|\bigr)
                    \int_{\R^3} \bigl|u(y)\bigr|\rho^{(2)}(y,0)\dy
\intertext{and}
   &\Bigl|\int_{\Delta_3} u(x-y)\rho^{(2)}(x,y)\diff(x,y)\Bigr|
    \,\leq\, \int_{\Delta_3} \bigl|u(x-y)\bigr|\rho^{(2)}(x,y)\diff(x,y) 
             \\[1ex]
   &\qquad \,\leq\, \int_{\Lambda_\ell}\int_{|y-x|>r} 
                       \bigl|u(x-y)\bigr|\rho^{(2)}(x,y)\dy\dx
           \,\leq\, |\Lambda_\ell|
                    \int_{|y|>r} \bigl|u(y)\bigr|\rho^{(2)}(y,0) \dy\,.
\end{align*}
Since \req{translinv} converges absolutely
we can thus choose $r=r(\varepsilon)$ sufficiently large to make sure that
\be{Delta23}
   \limsup_{\ell\to\infty}\,\Biggl|
      \frac{1}{|\Lambda_\ell|} 
      \int_{\Delta_2\cup\Delta_3}
         u(x-y)\rho^{(2)}(x,y)\diff(x,y)\,
      \Biggr| \,\leq\, \varepsilon
\ee
for any given positive number $\varepsilon$.
On the other hand, using the translation invariance again, we have
\begin{align*}
   &\int_{\Delta_1} u(x-y)\rho^{(2)}(x,y)\diff(x,y)
   \,=\, \int_{\Lambda_{\ell-r}} \int_{|y-x|\leq r}
            u(x-y)\rho^{(2)}(x,y)\dy\dx\\[1ex]
   &\qquad 
    \,=\, |\Lambda_{\ell-r}| \int_{|y|\leq r} u(y)\rho^{(2)}(y,0)\dy\,,
\end{align*}
and hence,
\be{Delta1}
\begin{aligned}
   &\limsup_{\ell\to\infty}\,\Biggl|\frac{1}{|\Lambda_\ell|}
    \int_{\Delta_1} u(x-y)\rho^{(2)}(x,y)\diff(x,y)
    \,-\, \int_{\R^d} u(y)\rho^{(2)}(y,0)\dy\,\Biggr|\\[1ex]
   &\qquad \,\leq\, \int_{|y|>r} \bigl|u(y)\bigr|\rho^{(2)}(y,0)\dy\,.
\end{aligned}
\ee
Combining \req{Delta1} for $r=r(\varepsilon)$ with \req{Delta23},
the assertion \req{energyeasy} follows by letting $\varepsilon\to 0$.
\end{proof}

We mention that the Kirkwood-Salsburg equations (cf., e.g., \cite{DRUE70})
can be used to argue that the integrand of \req{energyeasy} is bounded
near the origin when $\mathbb{P}$ is a $(\beta,\mu,u)$-Gibbs measure,
so that for ``matching'' $u$ and $\mathbb{P}$ the integral is absolutely 
convergent and finite by virtue of \req{majorant} and \req{Ruelle-bound}.

\section{Uniqueness results of Henderson type in the thermodynamical limit}
\label{Sec:Henderson}
We now formulate Henderson's theorem in the spirit of his original paper
\cite{RHEN74}, and provide a rigorous proof, based on arguments 
borrowed from \cite{RHEN74} and from the proof of \cite[Thm.~2.34]{HGEO11}.

\begin{theorem}
\label{Thm:Hen1}
Let $u,v\in\U$, $\beta>0$, and $\mu,\mu'\in\R$ be given, and assume that
$\mathbb{P}_u\in\P(\beta,\mu,u)$ and $\mathbb{P}_v\in\P(\beta,\mu',v)$
admit the same density $\rho^{(1)}$ and the same
pair correlation function $\rho^{(2)}$.
Then $\mu=\mu'$ and $u=v$ almost everywhere.
\end{theorem}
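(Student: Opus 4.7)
The plan is to follow Henderson's two-step strategy. \emph{Step 1:} use the Gibbs variational principle to upgrade the hypothesis into the much stronger statement that $\mathbb{P}_u$ itself is a Gibbs measure for both parameter triples $(\beta,\mu,u)$ and $(\beta,\mu',v)$. \emph{Step 2:} exploit this double compatibility in the Ruelle equation \req{ruelle-eq} to read off pointwise identities that determine $u-v$ and $\mu-\mu'$.

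For Step 1, Proposition~\ref{lem:energyeasy} is the key ingredient: it shows that $E(u,\mathbb{P})$ depends on $\mathbb{P}$ only through the pair correlation function $\rho^{(2)}$. Combined with $\rho(\mathbb{P})=\rho^{(1)}$, the hypothesis gives $E(u,\mathbb{P}_u)=E(u,\mathbb{P}_v)$, $E(v,\mathbb{P}_u)=E(v,\mathbb{P}_v)$, and $\rho(\mathbb{P}_u)=\rho(\mathbb{P}_v)$. Applying Theorem~\ref{thm:gvp} twice produces
\[
   \gp_{\beta,\mu}(u,\mathbb{P}_u)\,\geq\,\gp_{\beta,\mu}(u,\mathbb{P}_v),
   \qquad
   \gp_{\beta,\mu'}(v,\mathbb{P}_v)\,\geq\,\gp_{\beta,\mu'}(v,\mathbb{P}_u);
\]
in both, the $\mu\rho$ and $E$ contributions cancel and only the entropy terms remain, yielding $S(\mathbb{P}_u)\leq S(\mathbb{P}_v)\leq S(\mathbb{P}_u)$. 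Hence both inequalities are equalities, and the strict-inequality clause of Theorem~\ref{thm:gvp} forces $\mathbb{P}_v\in\P(\beta,\mu,u)$ and, symmetrically, $\mathbb{P}_u\in\P(\beta,\mu',v)$.

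For Step 2, I would test the two versions of \req{ruelle-eq} satisfied by $\mathbb{P}_u$ against local observables of the form $F(\gamma)=\mathbf{1}\{|\gamma\cap\Delta|=M\}\,\varphi(\gamma\cap\Delta)\,G(\gamma\cap\Delta^c)$, where $\Delta$ is a bounded box, $\varphi$ is symmetric on $\Delta^M$, and $G$ is bounded. Only the $N=M$ summand contributes, and the arbitrariness of $\varphi$ and $G$ reduces the compatibility of the two expressions to the pointwise identity
\[
   e^{\beta\mu M-\beta H_u(\xx_M)-\beta W_u(\xx_M;\gamma)}
   \,=\,
   e^{\beta\mu' M-\beta H_v(\xx_M)-\beta W_v(\xx_M;\gamma)}
\]
for Lebesgue-a.e.\ $\xx_M\in\Delta^M$ and $\mathbb{P}_u$-a.e.\ $\gamma\in\Gamma(\Delta^c)$. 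The case $M=1$ (where $H_u$ is an empty sum) gives $W_u(x;\gamma)-W_v(x;\gamma)=\mu-\mu'$, a constant in $x$; substituting this into the $M=2$ case eliminates the $\gamma$-dependence and leaves $u(x_1-x_2)=v(x_1-x_2)$ for a.e.\ $(x_1,x_2)\in\Delta^2$. Since $\Delta$ may be enlarged arbitrarily, $u=v$ almost everywhere on $\R^d$; feeding this back into the $M=1$ identity and using the a.s.\ convergence \req{asconvergence} to make sense of $W_u-W_v$ then gives $\mu=\mu'$.

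The main technical obstacle is Step 2: extracting pointwise equality of integrands from the integrated identity \req{ruelle-eq}, and handling configurations on which $H_u,H_v$ or $W_u,W_v$ may equal $+\infty$ so that the exponentials vanish. This should be feasible because the Ruelle bound \req{Ruelle-bound}, the stability of potentials in $\U$, and the tempered support of $\mathbb{P}_u$ ensure that such degenerate configurations are $\mathbb{P}_u$-negligible, but the measurability and a.e.\ bookkeeping have to be done carefully.
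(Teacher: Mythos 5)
Your proposal is correct and follows essentially the same route as the paper: the variational principle plus Proposition~\ref{lem:energyeasy} to show both inequalities are equalities (your reduction to $S(\mathbb{P}_u)\leq S(\mathbb{P}_v)\leq S(\mathbb{P}_u)$ is just a rephrasing of the paper's step of adding the two inequalities and showing the sum vanishes), followed by testing \req{ruelle-eq} for the doubly-compatible measure $\mathbb{P}_u$ to obtain the paper's identity \req{HuWuHvWv}, with $N=1$ and $N=2$ giving \req{interactionequality} and $u=v$, and \req{asconvergence} closing the loop to get $\mu=\mu'$. The technical caveats you flag at the end are exactly the points the paper handles via \req{localobservable} and the tempered support of the Gibbs measures.
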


\begin{proof}
By Theorem~\ref{thm:gvp} we have
\[
\gp_{\beta,\mu}(u,\mathbb{P}_v) \leq \gp_{\beta,\mu}(u,\mathbb{P}_u) 
\quad\text{ and } \quad
\gp_{\beta,\mu'}(v,\mathbb{P}_u) \leq \gp_{\beta,\mu'}(v,\mathbb{P}_v)\,.
\]
Since $\gp_{\beta,\mu}(u,\mathbb{P}_u)$ 
and $\gp_{\beta,\mu'}(v,\mathbb{P}_v)$ 
are finite we may write these inequalities as
\begin{align}\label{eq:nonnegativevariations}
\gp_{\beta,\mu}(u,\mathbb{P}_v)
 -\gp_{\beta,\mu}(u,\mathbb{P}_u)\leq 0 
\quad \text{ and } \quad
 \gp_{\beta,\mu'}(v,\mathbb{P}_u)
 -\gp_{\beta,\mu'}(v,\mathbb{P}_v) \leq 0\,,
\end{align}
and adding them we get
\begin{align}\label{eq:nonnegativefreeenergy}
\gp_{\beta,\mu}(u,\mathbb{P}_v) -\gp_{\beta,\mu}(u,\mathbb{P}_u)
+ \gp_{\beta,\mu'}(v,\mathbb{P}_u) -
  \gp_{\beta,\mu'}(v,\mathbb{P}_v) \leq 0 \,.
\end{align}
Recalling the definition~\req{freeenergyfunctional} of the 
grand potential we have
\begin{align}
\nonumber
   &\gp_{\beta,\mu}(u,\mathbb{P}_v) -\gp_{\beta,\mu}(u,\mathbb{P}_u)
    + \gp_{\beta,\mu'}(v,\mathbb{P}_u) -\gp_{\beta,\mu'}(v,\mathbb{P}_v) \\[1ex]
\nonumber
   &\quad
    = \mu\thsp\rho (\mathbb{P}_v) 
      - E(u,\mathbb{P}_v)-\frac{1}{\beta} S(\mathbb{P}_v)
      - \mu\thsp\rho (\mathbb{P}_u) +  E(u,\mathbb{P}_u)
      + \frac{1}{\beta}\thsp S(\mathbb{P}_u)\\
\nonumber
   &\quad \qquad
      + \mu'\rho (\mathbb{P}_u) - E(v,\mathbb{P}_u) 
      - \frac{1}{\beta}\thsp S(\mathbb{P}_u)
      - \mu'\rho (\mathbb{P}_v) + E(v,\mathbb{P}_v)
      + \frac{1}{\beta}\thsp S(\mathbb{P}_v)
        \\[1ex]
\label{eq:vieleHamiltonians}
   &\quad
    =  -E(u,\mathbb{P}_v) +  E(u,\mathbb{P}_u)
       -  E(v,\mathbb{P}_u)+  E(v,\mathbb{P}_v)\,,
\end{align}
because $\rho(\mathbb{P}_u)=\rho(\mathbb{P}_v)=\rho^{(1)}$ by assumption.
Furthermore, by virtue of Proposition~\ref{lem:energyeasy} and the fact
that the pair correlation functions of $\mathbb{P}_u$ and
$\mathbb{P}_v$ coincide, there holds
\begin{align*}
   E(u,\mathbb{P}_u) = \frac{1}{2}\int_{\mathbb{R}^d} u(x)\rho^{(2)}(x,0)\diff x
   = E(u,\mathbb{P}_v)
\intertext{and}
   E(v,\mathbb{P}_u) = \frac{1}{2}\int_{\mathbb{R}^d} v(x)\rho^{(2)}(x,0)\diff x
   = E(v,\mathbb{P}_v)\,.
\end{align*}
Inserting this into \cref{eq:vieleHamiltonians} we conclude that
\begin{align*}
	\gp_{\beta,\mu}(u,\mathbb{P}_v) -\gp_{\beta,\mu}(u,\mathbb{P}_u)
	+ \gp_{\beta,\mu'}(v,\mathbb{P}_u) -\gp_{\beta,\mu'}(v,\mathbb{P}_v)=0.
\end{align*}
Accordingly, equality holds in \req{nonnegativefreeenergy}, and thus
necessarily in
both statements of \cref{eq:nonnegativevariations}. By the Gibbs variational
principle (Theorem~\ref{thm:gvp})
this implies that $\mathbb{P}_v\in\P(\beta,\mu,u)$ and 
$\mathbb{P}_u\in\P(\beta,\mu',v)$.

It therefore follows from \req{ruelle-eq} that
\[
\begin{aligned}
   \int_\Gamma F(\gamma)\diff\mathbb{P}_u(\gamma) 
   &\,=\,\sum_{N= 0}^\infty \frac{1}{N!} \int_{\Delta^N}\!
     \Biggl( \int_{\Gamma(\Delta^c)}\!\!F(\gamma')
     e^{-\beta W_u(\mathbf{x}_N;\gamma)} \diff \mathbb{P}_u(\gamma)\!\Biggr)
     e^{N\beta\mu-\beta H_u(\mathbf{x}_N)}\diff \mathbf{x}_N \\
   &\,=\,\sum_{N= 0}^\infty \frac{1}{N!} \int_{\Delta^N}\!
     \Biggl( \int_{\Gamma(\Delta^c)}\!\!F(\gamma')
     e^{-\beta W_v(\mathbf{x}_N;\gamma)} \diff \mathbb{P}_u(\gamma)\!\Biggr)
     e^{N\beta\mu'-\beta H_v(\mathbf{x}_N)}\diff \mathbf{x}_N
\end{aligned}
\]
for every $F\in L^1(\mathbb{P}_u)$ and every bounded domain $\Delta\subset\R^d$.
Therefore
\be{HuWuHvWv}
 H_u(\mathbf{x}_N)+W_u(\mathbf{x}_N;\gamma)-N\mu 
 \,=\, H_v(\mathbf{x}_N)+W_v(\mathbf{x}_N;\gamma)-N\mu'
\ee
for every $N\in\mathbb{N}$,
almost every $\mathbf{x}_N\in \Delta^N$ and  
$\mathbb{P}_u$ almost surely for $\gamma\in\Gamma(\Delta^c)$.
For $N=1$ this means that
\be{interactionequality}
	W_u(x;\gamma) - \mu \,=\, W_v(x;\gamma) - \mu'
\ee
for almost every $x\in \Delta$ and 
$\mathbb{P}_u$ and $\mathbb{P}_v$ almost surely for $\gamma\in\Gamma(\Delta^c)$.
Using the additivity of $W_u$ and $W_v$ 
in the first argument we thus conclude from the case $N=2$ of \req{HuWuHvWv}
that
\be{equality-vu}
	u(x-y)=v(x-y)
\ee
for almost every $x,y\in\Delta$, and since $\Delta$ was arbitrarily chosen, 
we have $u=v$ almost everywhere.

Inserting \req{equality-vu} into \req{definteraction} it follows that
\begin{equation*}
	W_u(x;\gamma_0) = W_v(x;\gamma_0)
\end{equation*}
for  almost every $x\in\Delta$ and almost every finite subset 
$\gamma_0\subset \Delta^c\cap\Lambda_\ell$.
Together with \req{localobservable} this implies that for every 
$\mathbb{P}\in\P$ there holds
\be{endlgleich}
	W_u(x;\gamma\cap\Lambda_\ell) = W_v(x;\gamma\cap\Lambda_\ell)
\ee
$\mathbb{P}$ almost surely for $\gamma\in\Gamma(\Delta^c)$.
By virtue of \req{asconvergence} and \req{endlgleich} we therefore have
\[
   W_u(x;\gamma) = W_v(x;\gamma)
\]
for almost every $x\in\Delta$ and $\mathbb{P}_u$ almost surely for 
$\gamma\in\Gamma(\Delta^c)$, and hence we conclude from 
\req{interactionequality}
that $\mu=\mu'$, which remained to be shown.
\end{proof}

Henderson stipulated the assumptions of Theorem~\ref{Thm:Hen1} 
from the canonical ensemble point of view. Concerning the
grand canonical perspective an analogous uniqueness result is as follows.

\begin{theorem}
\label{Thm:Hen2}
Let $u,v\in\U$, $\beta>0$, and $\mu\in\R$ be given, 
and assume that $\mathbb{P}_u\in\P(\beta,\mu,u)$ and 
$\mathbb{P}_v\in\P(\beta,\mu,v)$ 
admit the same pair correlation function $\rho^{(2)}$. 
Then $u=v$ almost everywhere.
\end{theorem}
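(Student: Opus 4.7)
The plan is to follow the strategy used in the proof of Theorem~\ref{Thm:Hen1}, with the agreeable twist that the assumption $\mu=\mu'$ removes the need for a separate hypothesis on the density and renders the final identification step unnecessary. First, I apply Theorem~\ref{thm:gvp} with the single chemical potential $\mu$ and the two pair potentials $u$ and $v$ to obtain
\[
  \gp_{\beta,\mu}(u,\mathbb{P}_v)-\gp_{\beta,\mu}(u,\mathbb{P}_u)\,\leq\, 0,
  \qquad
  \gp_{\beta,\mu}(v,\mathbb{P}_u)-\gp_{\beta,\mu}(v,\mathbb{P}_v)\,\leq\, 0.
\]
These rearrangements are legitimate because $\gp_{\beta,\mu}(u,\mathbb{P}_u)$ and $\gp_{\beta,\mu}(v,\mathbb{P}_v)$ are finite.

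Summing the two inequalities, every contribution of the form $\mu\thsp\rho(\mathbb{P}_{\bullet})$ cancels (this is where the common value of $\mu$ replaces Henderson's assumption of matching $\rho^{(1)}$) and so does every occurrence of the specific entropy $S(\mathbb{P}_{\bullet})$, leaving
\[
  -E(u,\mathbb{P}_v)+E(u,\mathbb{P}_u)-E(v,\mathbb{P}_u)+E(v,\mathbb{P}_v)\,\leq\,0.
\]
By Proposition~\ref{lem:energyeasy} each of these specific energies is an integral of the corresponding pair potential against the pair correlation function $\rho^{(2)}$ of the measure in the second slot; since $\mathbb{P}_u$ and $\mathbb{P}_v$ share the same $\rho^{(2)}$ by hypothesis, the left-hand side above vanishes identically. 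Hence both variational inequalities are in fact equalities, and Theorem~\ref{thm:gvp} forces $\mathbb{P}_v\in\P(\beta,\mu,u)$ and $\mathbb{P}_u\in\P(\beta,\mu,v)$.

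With this in hand I can extract $u=v$ almost everywhere exactly as in the second half of the proof of Theorem~\ref{Thm:Hen1}. Writing the Ruelle identity \req{ruelle-eq} for $\mathbb{P}_u$ once as a $(\beta,\mu,u)$-Gibbs measure and once as a $(\beta,\mu,v)$-Gibbs measure, over an arbitrary bounded box $\Delta\subset\R^d$, and comparing integrands yields
\[
  H_u(\mathbf{x}_N)+W_u(\mathbf{x}_N;\gamma)\,=\,H_v(\mathbf{x}_N)+W_v(\mathbf{x}_N;\gamma)
\]
for every $N\in\N$, almost every $\mathbf{x}_N\in\Delta^N$, and $\mathbb{P}_u$-almost every $\gamma\in\Gamma(\Delta^c)$; the $N\beta\mu$ prefactors cancel outright since the chemical potential is the same on both sides. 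The case $N=1$ gives $W_u(x;\gamma)=W_v(x;\gamma)$, and feeding this into the case $N=2$ together with the additivity of $W$ in its first argument produces $u(x-y)=v(x-y)$ for almost every $x,y\in\Delta$; since $\Delta$ was arbitrary, $u=v$ almost everywhere.

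No step is technically delicate, and I do not expect a real obstacle: this result is strictly easier than Theorem~\ref{Thm:Hen1}, since the matching chemical potentials allow the $\mu\thsp\rho$ terms to cancel without any hypothesis on $\rho^{(1)}$, and the concluding step from Theorem~\ref{Thm:Hen1} that identifies $\mu=\mu'$ is not needed at all. The one sanity check to make is that $E(u,\mathbb{P}_u)$ and $E(v,\mathbb{P}_v)$ are finite so that the rearrangements are permissible, which is guaranteed by the remark following Proposition~\ref{lem:energyeasy}: the Kirkwood-Salsburg equations make the integrand of \req{energyeasy} bounded near the origin for matching $u$ and $\mathbb{P}$, while \req{majorant} and \req{Ruelle-bound} handle the tail.
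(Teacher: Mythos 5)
Your proposal is correct and follows exactly the route the paper takes: the paper's proof of Theorem~\ref{Thm:Hen2} is simply the observation that the argument for Theorem~\ref{Thm:Hen1} goes through verbatim, with the cancellation in \req{vieleHamiltonians} now coming from the equality of the chemical potentials rather than from matching densities. Your additional remarks (the $N\beta\mu$ prefactors cancelling in the Ruelle identity, the finiteness sanity check) are accurate and consistent with the paper.
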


The proof is the same as for Theorem~\ref{Thm:Hen1}: 
This time \req{vieleHamiltonians} holds true because the chemical potentials 
are the same. We mention, however, that we do not know whether the 
counting densities
of the two Gibbs measures are necessarily the same, unless it is assumed that
the corresponding $(\beta,\mu,u)$-Gibbs measure is uniquely determined --
as it is, e.g., in the gas phase.

\section{Concluding Remarks}\label{Sec:problems}
We emphasize that for our results we do not stipulate that the system
is in the gas phase, nor that the set $\P(\beta,\mu,u)$ consists of 
a single Gibbs measure only.

In case it is known that $u$ is also radially symmetric, i.e., 
if the interaction of two particles only depends on their distance, then
one can show -- using the Markov-Kakutani fixed point
theorem as in the proof of \cite[Theorem~5.8]{DRUE69}, compare
Kuna~\cite{TKUN99} -- 
that there exists at least one rotation and
translation invariant Gibbs measure $\mathbb{P}_u\in\P(\beta,\mu,u)$,
which can be used to define a 
\emph{radial distribution function} $g:\R^+\to\R_0^+$ given by
\be{g}
   g(r) \,=\, \frac{\rho^{(2)}(x_1,x_2)}{(\rho^{(1)})^2}\,, \qquad 
   r = |x_1-x_2|\,,
\ee
provided that the density is nonzero.
Assuming further that $\mathbb{P}_v\in\P(\beta,\mu,v)$
is also rotation invariant, then
one obviously can impose in Theorem~\ref{Thm:Hen1}
-- as did Henderson --
that the radial distribution functions and the densities are the same for
these two Gibbs measures, and the statement of the theorem remains valid. 
We do not know whether in the formulation of Theorem~\ref{Thm:Hen2} 
$\rho^{(2)}$ can also be replaced by the radial distribution function
in the isotropic case.

Finally we remark that the representation~\req{energyeasy} of the
specific energy is not essential for the uniqueness argument. 
By its definition~\req{energy}, 
the specific energy $E(u,\mathbb{P})$ is the limit in $\R\cup\{+\infty\}$
of the expression given in \req{deltasplit} 
normalized by two times the volume of $\Lambda_\ell$, and hence, 
its value only depends on $u$ and on the pair correlation function $\rho^{(2)}$
associated with $\mathbb{P}$.
This suffices to conclude that the expression~\req{vieleHamiltonians}
sums up to zero.

Having settled the uniqueness problem the natural follow-up question 
concerns the existence of solutions of the inverse Henderson problem,
i.e., what are necessary and
sufficient conditions on a given triplet
$\beta,\rho>0$, $\mu\in\R$, and a nonnegative translation invariant function 
$\rho^{(2)}:\R^2\to\R$, such that
there exists a pair potential $u\in\U$ for which $\rho$ is the density
and $\rho^{(2)}$ is the pair correlation function of a 
Gibbs measure $\mathbb{P}\in\P(\beta,\mu,u)$.
Partial results for this problem have been
contributed, e.g., by Caglioti, Kuna, Lebowitz, and Speer~\cite{CKLS06},
Kuna, Lebowitz and Speer~\cite{KLS07}, and Koralov~\cite{LKOR07}.
The general existence problem is widely open, though.


\bibliographystyle{siam}

\begin{thebibliography}{10}

\bibitem{CKLS06}
{\sc E.~Caglioti, T.~Kuna, J.~Lebowitz, and E.~Speer}, {\em Point processes
  with specified low order correlations}, Markov Process. Related Fields, 12
  (2006), pp.~257--272.

\bibitem{RDOB64}
{\sc R.~Dobrushin}, {\em Investigation of conditions for the asymptotic
  existence of the configuration integral of {G}ibbs’ distribution}, Theor.
  Probability Appl., 9 (1964), pp.~566--581.

\bibitem{GGAL99}
{\sc G.~Gallavotti}, {\em Statistical Mechanics: A Short Treatise}, Springer,
  Berlin, 1999.

\bibitem{HGEO94}
{\sc H.-O. Georgii}, {\em Large deviations and the equivalence of ensembles for
  {G}ibbsian particle systems with superstable interaction}, Probab. Theory
  Related Fields, 99 (1994), pp.~171--195.

\bibitem{HGEO95}
\leavevmode\vrule height 2pt depth -1.6pt width 23pt, {\em The equivalence of
  ensembles for classical systems of particles}, J. Stat. Phys., 80 (1995),
  pp.~1341--1378.

\bibitem{HGEO11}
\leavevmode\vrule height 2pt depth -1.6pt width 23pt, {\em Gibbs measures and
  phase transitions}, De Gruyter, Berlin, 2.~ed., 2011.

\bibitem{HGEO93}
{\sc H.-O. Georgii and H.~Zessin}, {\em Large deviations and the maximum
  entropy principle for marked point random fields}, Probab. Theory Related
  Fields, 96 (1993), pp.~177--204.

\bibitem{GIBB02}
{\sc J.~Gibbs}, {\em Elementary Principles in Statistical Mechanics},
  Scribner's Sons, New York, 1902.

\bibitem{GR71}
{\sc R.~Griffiths and D.~Ruelle}, {\em Strict convexity (``continuity'') of the
  pressure in lattice systems}, Comm. Math. Phys., 23 (1971), pp.~169--175.

\bibitem{RHEN74}
{\sc R.~Henderson}, {\em A uniqueness theorem for fluid pair correlation
  functions}, Phys. Lett. A, 49 (1974), pp.~197--198.

\bibitem{HK64}
{\sc P.~Hohenberg and W.~Kohn}, {\em Inhomogeneous electron gas}, Phys. Rev.,
  136 (1964), pp.~B\,864--B\,871.

\bibitem{TKUN03}
{\sc Y.~G. Kondratiev and T.~Kuna}, {\em Correlation functionals for {G}ibbs
  measures and {R}uelle bounds}, Methods Funct. Anal. Topology, 9 (2003),
  pp.~9--58.

\bibitem{LKOR07}
{\sc L.~Koralov}, {\em An inverse problem for {G}ibbs fields with hard core
  potential}, J. Math. Phys., 48 (2007).
\newblock 053301.

\bibitem{TKUN99}
{\sc T.~Kuna}, {\em Studies in configuration space analysis and applications},
  PhD thesis, Rheinische Friedrich-Wilhelms-Universit\"at, Bonn, 1999.

\bibitem{KLS07}
{\sc T.~Kuna, J.~Lebowitz, and E.~Speer}, {\em Realizability of point
  processes}, J. Stat. Phys., 129 (2007), pp.~417--439.

\bibitem{DMER65}
{\sc N.~Mermin}, {\em Thermal properties of the inhomogeneous electron gas},
  Phys. Rev., 137 (1965), pp.~A\,1471--A\,1473.

\bibitem{DRUE67}
{\sc D.~W. Robinson and D.~Ruelle}, {\em Mean entropy of states in classical
  statistical mechanics}, Comm. Math. Phys., 5 (1967), pp.~288--300.

\bibitem{DRUE69}
{\sc D.~Ruelle}, {\em Statistical Mechanics: Rigorous Results}, W.A. Benjamin
  Publ., New York, 1969.

\bibitem{DRUE70}
\leavevmode\vrule height 2pt depth -1.6pt width 23pt, {\em Superstable
  interactions in classical statistical mechanics}, Comm. Math. Phys., 18
  (1970), pp.~127--159.

\end{thebibliography}

\end{document}